\newcommand{\ignore}[1]{}
\newtheorem{theorem}{Theorem}
\newcommand{\ket}[1]{|#1\rangle}
\newcommand{\bra}[1]{\langle #1|}
\newcommand{\blue}[1]{\textcolor{blue}{#1}}
\let\oldsqrt\sqrt
\def\sqrt{\mathpalette\DHLhksqrt}
\def\DHLhksqrt#1#2{
\setbox0=\hbox{$#1\oldsqrt{#2\,}$}\dimen0=\ht0
\advance\dimen0-0.2\ht0
\setbox2=\hbox{\vrule height\ht0 depth -\dimen0}
{\box0\lower0.4pt\box2}}
\DeclareFontFamily{OT1}{pzc}{}
\DeclareFontShape{OT1}{pzc}{m}{it}
              {<-> s * [1.25] pzcmi7t}{}
\DeclareMathAlphabet{\mathpzc}{OT1}{pzc}
                                 {m}{it}
\begin{document}

\title{Quantum detailed balance conditions and fluctuation relations for thermalizing quantum dynamics}

\author{M. Ramezani}
\affiliation{Department of Physics, Sharif University of Technology, Tehran 14588, Iran}
\affiliation{School of Physics, Institute for Research in Fundamental Sciences (IPM), Tehran 19395, Iran}

\author{F. Benatti}
\affiliation{Department of Physics, University of Trieste, I-34151 Trieste, Italy}
\affiliation{National Institute for Nuclear Physics (INFN), Trieste Section, I-34151 Trieste, Italy}

\author{R. Floreanini}
\affiliation{National Institute for Nuclear Physics (INFN), Trieste Section, I-34151 Trieste, Italy}

\author{S. Marcantoni}
\affiliation{Department of Physics, University of Trieste, I-34151 Trieste, Italy}
\affiliation{National Institute for Nuclear Physics (INFN), Trieste Section, I-34151 Trieste, Italy}

\author{M. Golshani}
\affiliation{Department of Physics, Sharif University of Technology, Tehran 14588, Iran}
\affiliation{School of Physics, Institute for Research in Fundamental Sciences (IPM), Tehran 19395, Iran}

\author{A. T. Rezakhani}
\affiliation{Department of Physics, Sharif University of Technology, Tehran 14588, Iran}

\date{\today}

\begin{abstract}
Quantum detailed balance conditions and quantum fluctuation relations are two important concepts in the dynamics of open quantum systems: both concern how such systems behave when they thermalize because of interaction with an environment. We prove that for thermalizing quantum dynamics the quantum detailed balance conditions yield validity of a quantum fluctuation relation (where only forward-time dynamics is considered). This implies that to have such a quantum fluctuation relation (which in turn enables a precise formulation of the second law of thermodynamics for quantum systems) it suffices to fulfill the quantum detailed balance conditions. We, however, show that the converse is not necessarily true; indeed, there are cases of thermalizing dynamics which feature the quantum fluctuation relation without satisfying detailed balance. We illustrate our results with three examples.
\end{abstract}
\pacs{05.70.-a, 05.70.Ln, 03.65.-w, 05.40.-a}
\maketitle

\section{Introduction}
\label{sec:intro}

Thermodynamics is a successful theory to describe (equilibrium) properties of macroscopic open systems \cite{book:Callen}. Among the three laws of thermodynamics, the second law has a fundamental and distinct feature. This law governs how open systems interacting with their ambient environment tend to equilibrate or thermalize with the environment, and in this sense it naturally incorporates the concept of \textit{irreversibility}. This peculiar feature of the second law begs the question of how one can explain its emergence from fundamental laws of nature.  

The ``detailed balance condition" \cite{book:Callen}, roughly stating that at equilibrium each elementary process (formally, ``$i$"$\to$``$j$") and its reverse (``$j$"$\to$``$i$") need to be equally probable ($p^{(\mathrm{eq})}_{i}w_{i\to j}=p^{(\mathrm{eq})}_{j}w_{j\to i}$, where $p^{(\mathrm{eq})}_{i}$ and $w$ denote, respectively, the probability of a state at equilibrium and the state transition rates), has gained a pivotal role in understanding the process of equilibration (and thermalization). It was employed by Boltzmann in proving his $H$-theorem and by Maxwell in the development of kinetic theory. The validity of this condition has been attributed to fundamental symmetries of the basic dynamical laws of nature under \textit{time-reversal}---or to the very concept of microscopic \textit{reversibility} \cite{Jarzynski-annualreview}.  

Another relevant and ubiquitous feature in the behavior of open systems described by thermodynamics at equilibrium is embodied by the \textit{fluctuations} of their properties  around average values given by thermodynamics---which typically diminish when the open system becomes large. A powerful approach to study fluctuations of thermodynamic quantities is provided by ``fluctuation theorems" \cite{Evans-Searles}. The interest in this subject has been specially spurred and reinvigorated recently by the derivation of interesting and important ``fluctuation relations" by Jarzynski \cite{1997-Jarzynski} and Crooks \cite{1999-Crooks}. Such relations connect the work done on a classical system by an external driving force to the equilibrium free energy difference between the initial and the final states of the system. The Crooks relation, in particular, which is the more general of the two and implies the Jarzynski equality as a corollary, compares the probability of \textit{doing} a certain amount of work under a driving protocol in the \textit{forward}-time direction with the probability of \textit{extracting} the same amount of work in the \textit{backward}-time (i.e., time-reversed) protocol, providing a refined statement of the second law of thermodynamics. Other similar relations have also been studied since then, also for thermodynamic quantities other than work \cite{2012-Seifert,book:Evans}.

Given the fundamentally different features of \textit{quantum} mechanics and the diversity of quantum dynamics in contrast to classical mechanics and dynamics, the situation with either the detailed balance conditions and the fluctuation relations becomes even more interesting for quantum systems. In fact, both concepts have been extended to the quantum domain and extensively studied in various aspects---see, e.g., Refs. \cite{1972-Kossakowski,1976-Alicki,1977-Kossakowski,1984-Majewski,1998-Majewski,2007-Fagnola,2008-Fagnola,2010a-Fagnola,2010b-Fagnola,thesis,Db-ent,2000-Kurchan,2000-Tasaki,2009-Esposito,2011-Campisi,2009-Campisi,2004-Jarzynski,2013-Albash,2014-Rastegin,2015-Goold,2015-Aurell,2015-Manzano,2014-Jaksic,2018-Ramezani,2008-Crooks,Crooks-T} (and the references therein). These subjects constitute part of the emerging field of quantum thermodynamics \cite{qthermo,qthermo-2,2016-Alipour}.  

Considering that time-reversal is in the heart of both fluctuation relations and detailed balanced conditions, it seems natural that these concepts should be intimately related. Indeed, in the proof of the Crooks fluctuation relation the detailed balance condition has been used \cite{1999-Crooks,2000-Crooks}. Yet, and up to the best of our knowledge, a systematic and comprehensive investigation of this relation and of the implications of either concept on the other one is still lacking for quantum systems. 

In this paper we partially bridge this gap for the more general case of \textit{open}-system quantum dynamics which \textit{thermalize}. In particular, we rigorously prove that the quantum detailed balance (QDB) conditions imply (a forward-forward version of) the quantum fluctuation relations (QFRs), but the converse is not necessarily valid. In doing so, we first extend a recently proposed forward-forward version of the QFRs for \textit{heat} exchange in open quantum systems \cite{2018-Ramezani}, which in contrast to forward-backward Crooks-like QFRs deals with the ratio of two probabilities along the forward path. We discuss some conditions for thermalizing dynamics under which this relation holds. Next we consider two versions of the QDB condition and show how they can enable this QFR. We supply examples to illustrate our results, and in particular to underline the point that to have the QFR we do not necessarily need the QDB condition---that is, QDB $\Rightarrow$ QFR but QFR $\not\Rightarrow$ QDB.   

This paper is structured as follows. In Sec. \ref{sec: Therm}, we review some basics about the dynamics of open quantum systems and define the class of dynamics we are interested in, namely those dynamics with a unique asymptotic state which is thermal. In Sec. \ref{sec: FluctRel}, we prove two results regarding the validity of the QFR: (i) an asymptotic QFR for energy exchange in an open quantum system undergoing a thermalizing dynamics, and (ii) for finite time thermalizing dynamics of qubits whose fixed-point states is also thermal. Section \ref{sec: DetBal}  is devoted to the description of two important QDB conditions introduced in the literature. This property will be related to the QFR in Sec. \ref{sec: FluctDet}. We illustrate our results with three different examples in Sec. \ref{sec: Examples}. The results are summarized in Sec. \ref{sec:conc}.

\section{Thermalizing dynamics}
\label{sec: Therm}

The dynamics of an open quantum system with the Hilbert space $\mathpzc{H}\equiv \mathbb{C}^{d}$, uncoupled with an \textit{environment} initially at time $\tau=0$, but later interacting with its environment is described by a (one-parameter family of) linear completely-positive, trace preserving (CPTP) quantum \textit{dynamical maps} or \textit{channels}, which transform any initial density matrix $\varrho(0)$ of the system into \cite{book:Nielsen}
\begin{equation}
\label{dymaps}
\varrho(\tau)=\mathpzc{G}_{\tau}[\varrho(0)],\, \tau\geqslant 0.
\end{equation}
To these dynamical maps one can also associate \textit{dual} dynamical maps $\mathpzc{G}^{\sharp}_{\tau}$ in the Heisenberg picture defined through 
\begin{equation}
\mathrm{Tr}\big[\mathpzc{G}_{\tau}[\sigma]A\big]=\mathrm{Tr}\big[\sigma \mathpzc{G}^{\sharp}_{\tau}[A]\big],
\label{eq:Shro-Heis}	
\end{equation}
for all density matrices $\sigma\in\mathpzc{H}$ and all (bounded) $d\times d$ complex matrices $A$ ($\in M_d(\mathbb{C})$.
 
Let us also assume that the (bare) open quantum system in absence of the environment is described by a time-independent Hamiltonian
\begin{equation}
\label{Ham}
H=\sum_{m=1}^{d}E_{m}\ket{m}\bra{m},\, (E_{m}\in\mathbb{R},\,\langle m|m'\rangle=\delta_{mm'}),
\end{equation}
and let us associate with that the following equilibrium or thermal state at inverse temperature $\beta$:
\begin{equation}
\label{therstates}
\varrho^{(\beta)} =e^{-\beta H}/\mathrm{Tr}[e^{-\beta H}].
\end{equation}
It is known that any CPTP dynamical map (equivalently called quantum ``channel" or ``operation") can always be written through the Kraus representation
\begin{equation}
\label{channel}
\mathpzc{G}_{\tau}[\cdot]=\sum_{j}G^{(j)}_{\tau}\,\cdot\, G^{(j)\dag}_{\tau},
\end{equation}
where $G^{(j)}_{\tau}\in M_d(\mathbb{C})$ and $\sum_j G^{(j)\dag}_{\tau} G^{(j)}_{\tau}=\mathbbmss{I}$ \cite{book:Nielsen}. However, under some specific conditions such as weak coupling with the environment, the Born-Markov approximation, and the secular approximation, one can show that the dynamics of the system can be recast through the \textit{master equation} \cite{2002-Breuer,hbar}
\begin{equation}
\label{ME}
\partial_{\tau}\varrho(\tau)=\mathpzc{L}[\varrho(\tau)],
\end{equation}
where
\begin{equation}
\mathpzc{L}[\cdot]=-i[H,\cdot] + \sum_{k,l=1}^{d^2-1} C_{kl}\big(F_k \cdot F^{\dag}_l - (1/2)\big\{F_l^{\dag} F_k , \cdot\big\}\big)
\label{Lindblad1}
\end{equation}
is the time-independent generator of the dynamics in the Lindblad form, in the sense that $\partial_{\tau}\mathpzc{G}_{\tau}=\mathpzc{L}\circ\mathpzc{G}_{\tau}$ or $\mathpzc{G}_{\tau}=e^{\tau\mathpzc{L}}$, $C=[C_{kl}]$ is a positive semidefinite matrix, $\{F_{k}\}_{k=1}^{d^{2}}$ is a set of suitable orthonormal basis matrices such that $F_{d^2}=\mathbbmss{I}/d$ and $\mathrm{Tr}[F^{\dag}_{k}F_l]=\delta_{kl}$, and the map composition $\circ$ is understood as $\mathpzc{G}_{s}\circ\mathpzc{G}_{t}[\cdot]=\mathpzc{G}_{s}\big[\mathpzc{G}_{t}[\cdot]\big]$. In this case $\mathpzc{G}_{\tau}$ would satisfy the \textit{semigroup} composition law
\begin{equation}
\label{semglaw}
\mathpzc{G}_{\tau_1+\tau_2}=\mathpzc{G}_{\tau_1}\circ\mathpzc{G}_{\tau_2},\,\forall  \tau_{1,2}\geqslant 0.
\end{equation}

It will be useful later in the paper to note that if the dynamics is in the form of Eq. (\ref{ME}) with a Lindblad generator as in Eq. (\ref{Lindblad1}) (i.e., $\mathpzc{G}_{\tau}=e^{\tau \mathpzc{L}}$), the dual dynamical map $\mathpzc{G}^{\sharp}_{\tau}=e^{\tau \mathpzc{L}^{\sharp}}$ on observables is given by the dual of the generator
\begin{equation}
\label{Lindblad1.2}
\mathpzc{L}^{\sharp}[\cdot]=i[H,\cdot]+\sum_{k,l=1}^{d^2-1}C_{kl}\big(F^{\dag}_l \cdot F_k- (1/2)\big\{F_l^{\dag} F_k, \cdot\big\}\big).
\end{equation}

Now that we have explained what the state of an open system can be and how to its observables one can associate a dynamics too, we end this section by stating two definitions which are pivotal in this paper.
\medskip

\textbf{Definition:} We call a dynamical map $\mathpzc{G}_{\tau}$ \textit{thermalizing} to inverse temperature $\beta$ if for any initial state $\varrho(0)$ we have
\begin{equation}
\label{Gibbs3}
\varrho(\infty)=\lim_{\tau\to\infty}\mathpzc{G}_{\tau}[\varrho(0)]=\varrho^{(\beta)},
\end{equation} 
that is, $\varrho^{(\beta)}$ becomes the \textit{asymptotic state} of the dynamics for any initial state. 

In particular, we shall be interested in the scenario where the system initial state is thermal at inverse temperature $\beta_{\mathrm{i}}$, $\varrho(0)=\varrho^{(\beta_{\mathrm{i}})}$, and the final one is thermal at inverse temperature $\beta_{\mathrm{f}}$, $\varrho(\infty)=\varrho^{(\beta_{\mathrm{f}})}$ (with $\beta_{\mathrm{f}}\neq\beta_{\mathrm{i}}$). 

As we shall see later, thermalizing properties of the system dynamics need not require that the environment be a heat bath in equilibrium at inverse temperature $\beta_{\mathrm{f}}$. Furthermore, the dynamical map $\mathpzc{G}_{\tau}$ may not obey a semigroup composition law and may show memory and non-Markovian effects. In the latter case, although $\varrho^{(\beta_{\mathrm{f}})}$ is the asymptotic state for $\mathpzc{G}_{\tau}$, in general it is not $\mathpzc{G}_{\tau}$-invariant.
\medskip

\textbf{Definition:} A dynamical map is called \textit{fixed-point thermalizing} (FPT) if it is thermalizing (to some inverse temperature $\beta_{\mathrm{f}}$) and the thermal state is its fixed point too, 
\begin{equation}
\mathpzc{G}_{\tau}[\varrho^{(\beta_{\mathrm{f}})}]=\varrho^{(\beta_{\mathrm{f}})},\,\forall \tau\geqslant0.
\label{fix-point-maps}
\end{equation} 

\section{Quantum fluctuation relation (QFR)}
\label{sec: FluctRel}

The Crooks QFR for a dynamical system is a typical instance of fluctuation relations. It states that   
\begin{equation}
\frac{P_{\textsc{f}}(+\mathbbmss{W};\tau)}{P_{\textsc{r}}(-\mathbbmss{W};\tau)} =e^{\beta(\mathbbmss{W}-\Delta \mathbbmss{F})},
\end{equation}
that is, the ratio of the probability of \textit{doing} a certain amount of work $\mathbbmss{W}$ under a driving protocol in the \textit{forward}-time direction (hence the subscript ``\textsc{f}") and the probability of \textit{extracting} the same amount of work in the \textit{backward}-time (``\textsc{r}") protocol is determined by the difference between the initial and final free energies. This (forward-backward) relation has been extended in numerous respects both for classical and quantum systems and also for quantities other than work.

More recently, however, a distinct QFR has been proposed in Ref. \cite{2018-Ramezani} for the case of irreversible dynamics that cannot be run backward in time. In such a context, unlike the typical case of forward-backward Crooks-like QFRs, both probabilities are calculated along the forward-time direction (hence ``forward-forward"). Specifically, this QFR concerns heat exchange in an open quantum system evolving in time through a thermalizing Markovian dynamics, and shows that the probability $P(+\mathbbmss{Q};\tau)$ of absorbing a certain amount of heat $\mathbbmss{Q}$ from the environment at time $\tau$ is related to the probability of releasing the same amount of heat $P(-\mathbbmss{Q};\tau)$ via an exponential factor, which depends on $\mathbbmss{Q}$ and on the difference between the initial inverse temperature of the system, $\beta_{\mathrm{i}}$, and the asymptotic temperature determined by the dynamics, $\beta_{\mathrm{f}}$,
\begin{equation}
\label{fluct0}
\frac{P(+\mathbbmss{Q};\tau)}{P(-\mathbbmss{Q};\tau)}=e^{\Delta\beta\,\mathbbmss{Q}}, \quad 
\Delta\beta:=\beta_{\mathrm{i}}-\beta_{\mathrm{f}},
\end{equation}
where we have removed the subscript ``\textsc{f}." An interesting feature of the above expression is that although both $P(+\mathbbmss{Q};\tau)$ and $P(-\mathbbmss{Q};\tau)$ are time-dependent, their ratio is time-independent. Comparing two probabilities both evaluated along the \textit{forward}-time path removes the issue of defining the reverse path for an irreversible dynamics. Moreover, such a result does not require unitary (closed-system) dynamics, thus in these two specific senses this approach may be complementary to large part of the existing literature on QFRs.

Here we provide a generalized framework for the forward-forward QFR. Given the setting of the previous section, the probability that the $d$-level system (prepared initially at the thermal state $\varrho(0)= \varrho^{(\beta_{\mathrm{i}})}$) \textit{absorbs} a positive amount of energy $\mathbbmss{E}\geqslant 0$ from the environment in the time interval $\tau$ is given by
\begin{equation}
P(+\mathbbmss{E};\tau)=\sum_{mn}p_{m}(\beta_{\mathrm{i}})\,p(\ket{m}\rightarrow\ket{n};\tau)\,\delta(\mathbbmss{E}-(E_{n}-E_{m})),
\label{heat-absorption-probability}
\end{equation}   
where $p(\ket{m}\rightarrow\ket{n};\tau)$ is the transition probability from the state $\ket{m}$ to the state $\ket{n}$ in the time interval $\tau$,
\begin{equation}
p(\ket{m}\rightarrow\ket{n};\tau)=\bra{n}\mathpzc{G}_{\tau}[\ket{m}\bra{m}]\ket{n}.
\label{eq:transition-probability}
\end{equation}
and $p_m(\beta_{\mathrm{i}})$ is the probability that the system is found in the state $\ket{m}\bra{m}$,
\begin{equation}
p_{m}(\beta_{\mathrm{i}})=\bra{m}\varrho^{(\beta_{\mathrm{i}})} \ket{m}=e^{-\beta_{\mathrm{i}} E_{m}}/\mathrm{Tr}[e^{-\beta_{\mathrm{i}} H}].
\label{initial-probability}
\end{equation}
In terms of the Kraus operators in Eq. \eqref{channel} whose entries are $[G^{(j)}_{\tau}]_{nm}=\langle n\vert G^{(j)}_{\tau}\vert m\rangle$ (with respect to the Hamiltonian eigenbasis $\{|m\rangle\}_{m=1}^{d}$), we can rewrite
\begin{equation}
p(\ket{m}\rightarrow\ket{n};\tau)=\sum_{j}|[G^{(j)}_{\tau}]_{nm}|^2.
\label{transition-probability-kraus}
\end{equation}
Likewise, the probability that the system \textit{releases} the amount of energy $\mathbbmss{E}\geqslant 0$ to the environment in the time interval $\tau$ is given by
\begin{equation}
P(-\mathbbmss{E};\tau)=\sum_{mn}p_{n}(\beta_{\mathrm{i}})\,p(\ket{n}\rightarrow\ket{m};\tau)\,\delta(\mathbbmss{E}-(E_{n}-E_{m})).
\label{heat-release-probability}
\end{equation}

Rather than comparing two probability distributions related to different dynamics, e.g., corresponding to a ``forward" protocol and a ``backward" protocol (as usually done in the literature about QFRs \cite{2011-Campisi}), following the approach presented recently in Ref. \cite{2018-Ramezani}, we concentrate on the ratio $P(+\mathbbmss{E};\tau)/P(-\mathbbmss{E};\tau)$. In particular, in the following sections we shall provide instances of dissipative QFR of the form [cf. Eq. \eqref{fluct0}]
\begin{equation}
\label{FluctRel}
R(\mathbbmss{E};\tau)\equiv\frac{P(+\mathbbmss{E};\tau)}{P(-\mathbbmss{E};\tau)}=e^{\Delta\beta\,\mathbbmss{E}},
\end{equation}
where $\Delta\beta=\beta_{\mathrm{i}}-\beta_{\mathrm{f}}$, with $\beta_{\mathrm{f}}$ being the asymptotic inverse temperature reached by the thermalization process. 

Note that,  if this relation holds, when $\beta_{\mathrm{i}}<\beta_{\mathrm{f}}$, that is, when the initial temperature is larger than the final one, the probability of absorbing a certain amount of energy $\mathbbmss{E}>0$ by the system is exponentially smaller than the probability of releasing the same amount of energy to the environment. If we assume that the thermalization process is due to the interaction with a thermal environment at inverse temperature $\beta_{\mathrm{f}}$, and that there is no ``work" contribution to the exchange of energy, Eq. \eqref{FluctRel} constitutes a precise mathematical statement for the observation that heat is expected to flow from the hot body to the cold one---in accordance with the Clausius statement of the second law of thermodynamics \cite{book:Callen}.

We remark that if the system dynamics is generated by a Lindblad-like \textit{time-dependent} generator $\mathpzc{L}_{\tau}$ whose form is akin to the form (\ref{Lindblad1}) but with a time-dependent Hamiltonian $H_{\tau}$ and a time-dependent matrix $C(\tau)$, then the total energy exchange rate between the system and the environment at time $\tau$ amounts to
\begin{equation}
\label{Alickitherm}
\partial_{\tau}\mathrm{Tr}\big[\varrho(\tau)\,H_{\tau}\big] =\mathrm{Tr}\left[\partial_{\tau}\varrho(\tau)\,H_{\tau}\right]+ \mathrm{Tr}\left[\varrho(\tau)\partial_{\tau}H_{\tau}\right] .
\end{equation}
The first term describes the \textit{work} exchange rate and the second the \textit{heat} exchange rate; only the latter contributes if the system Hamiltonian $H_{\tau}$
is not explicitly time-dependent, whence the energy $\mathbbmss{E}$ absorbed or released by the system can be interpreted as exchanged heat. 

Before investigating the connection between the QFR (\ref{FluctRel}) and the QDB conditions, we prove two results about the QFR. We need to point out to two useful relations for the transition probabilities:

(i) It is immediate to see that when 
\begin{equation}
\label{aux1}
e^{-\beta_{\mathrm{f}}E_m}\,p(\ket{m}\rightarrow\ket{n};\tau)=e^{-\beta_{\mathrm{f}}E_n}\,p(\ket{n}\rightarrow\ket{m};\tau),
\end{equation}
the QFR (\ref{FluctRel}) evidently holds. 

(ii) For any FPT dynamical map we have 
\begin{equation}
\sum_{n}p_{n}(\beta_{\mathrm{f}})\,p(\ket{n}\rightarrow\ket{m};\tau)=p_{m}(\beta_{\mathrm{f}}).
\label{prop-}
\end{equation}
This can be verified as
\begin{align*}
p_{m}(\beta_{\mathrm{f}})\overset{(\ref{fix-point-maps})}{=}\big[\mathpzc{G}_{\tau}[\varrho^{(\beta_{\mathrm{f}})}]\big]_{mm}\overset{(\ref{eq:transition-probability})}{=}\sum_{n}p(\ket{n}\rightarrow\ket{m};\tau)\,p_{n}(\beta_{\mathrm{f}}).
\end{align*}

Now we show that although arbitrary thermalizing dynamical maps do not necessarily satisfy the QFR (\ref{FluctRel}) \textit{instantaneously}, they all fulfill this property for \textit{asymptotically long times}.
\begin{theorem}
\label{theo: infinity}
For any thermalizing dynamical map $\mathpzc{G}_{\tau}$, 
\begin{equation}
R(\mathbbmss{E},\infty)=e^{\Delta\beta\,\mathbbmss{E}}.
\label{FR-infty}
\end{equation}
\end{theorem}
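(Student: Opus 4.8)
The plan is to reduce the statement to the sufficient condition already recorded in remark (i), Eq. \eqref{aux1}, which we will verify holds \emph{exactly} in the asymptotic limit; the only ingredient we actually need is the behavior of the transition probabilities \eqref{eq:transition-probability} as $\tau\to\infty$. First I would insert the thermalizing property \eqref{Gibbs3} into \eqref{eq:transition-probability} with the particular initial state $\varrho(0)=\ket{m}\bra{m}$. Since $\mathpzc{G}_{\tau}[\ket{m}\bra{m}]\to\varrho^{(\beta_{\mathrm{f}})}$ for every $m$, one obtains
\begin{equation}
p(\ket{m}\rightarrow\ket{n};\infty)=\bra{n}\varrho^{(\beta_{\mathrm{f}})}\ket{n}=p_{n}(\beta_{\mathrm{f}}),
\end{equation}
the crucial point being that the asymptotic transition probability forgets the source label $m$ entirely and collapses onto the Gibbs weight $e^{-\beta_{\mathrm{f}}E_{n}}/\mathrm{Tr}[e^{-\beta_{\mathrm{f}}H}]$ of the target state $\ket{n}$. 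Because $\mathpzc{H}=\mathbb{C}^{d}$ is finite-dimensional, the sums in \eqref{heat-absorption-probability} and \eqref{heat-release-probability} are finite, so the limit passes through the sum without any analytic subtlety.

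With this collapse in hand the remaining work is trivial. The left-hand side of \eqref{aux1} evaluated at $\tau=\infty$ reads $e^{-\beta_{\mathrm{f}}E_{m}}\,p_{n}(\beta_{\mathrm{f}})=e^{-\beta_{\mathrm{f}}(E_{m}+E_{n})}/\mathrm{Tr}[e^{-\beta_{\mathrm{f}}H}]$, which is manifestly symmetric under $m\leftrightarrow n$ and therefore coincides with the right-hand side $e^{-\beta_{\mathrm{f}}E_{n}}\,p_{m}(\beta_{\mathrm{f}})$. Thus condition \eqref{aux1} holds identically in the limit, and remark (i) immediately yields $R(\mathbbmss{E},\infty)=e^{\Delta\beta\,\mathbbmss{E}}$, which is \eqref{FR-infty}. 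Equivalently, to make the mechanism explicit, one may substitute $p(\ket{m}\rightarrow\ket{n};\infty)=p_{n}(\beta_{\mathrm{f}})$ directly into \eqref{heat-absorption-probability} and \eqref{heat-release-probability} and use the energy constraint $E_{n}=E_{m}+\mathbbmss{E}$ enforced by the Dirac delta to factor out the scalars $e^{-\beta_{\mathrm{f}}\mathbbmss{E}}$ and $e^{-\beta_{\mathrm{i}}\mathbbmss{E}}$, respectively; the surviving sum $\sum_{mn}e^{-(\beta_{\mathrm{i}}+\beta_{\mathrm{f}})E_{m}}\delta(\mathbbmss{E}-(E_{n}-E_{m}))$ is common to numerator and denominator and cancels in the ratio, together with the partition functions.

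I do not expect any genuine obstacle: the argument hinges entirely on the single observation that asymptotic transition probabilities become source-independent, after which the conclusion follows either by a one-line invocation of remark (i) or by elementary delta-function bookkeeping. The only step that formally deserves a word is the interchange of the $\tau\to\infty$ limit with the summation, and this is automatic in the finite-dimensional setting assumed throughout. It is worth emphasizing that, unlike relation \eqref{prop-}, this argument does \emph{not} require the map to be fixed-point thermalizing nor to obey the semigroup law \eqref{semglaw}; mere thermalization in the sense of \eqref{Gibbs3} already suffices asymptotically.
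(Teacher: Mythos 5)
Your argument is correct and is essentially identical to the paper's proof: both insert the thermalizing property \eqref{Gibbs3} into \eqref{eq:transition-probability} to get $p(\ket{m}\rightarrow\ket{n};\infty)=\bra{n}\varrho^{(\beta_{\mathrm{f}})}\ket{n}$ and then invoke the sufficient condition \eqref{aux1}. Your explicit check of the $m\leftrightarrow n$ symmetry and the alternative delta-function bookkeeping are just more detailed renditions of the same one-line argument.
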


\begin{proof}
We have
\begin{equation}
\label{thermprob}
p(\ket{m}\rightarrow\ket{n};\infty)\overset{(\ref{eq:transition-probability})}{=}\bra{n}\mathpzc{G}_{\infty}[\ket{m}\bra{m}]\ket{n}\overset{(\ref{Gibbs3})}{=}\bra{n}\varrho^{(\beta_{\mathrm{f}})}\ket{n}. 
\end{equation}
Thus Eq. \eqref{aux1} holds, which in turn yields relation (\ref{FluctRel}).
\end{proof}

For the case of FPT dynamical maps, no advantages follow compared to Theorem \ref{theo: infinity} except for qubits ($d=2$), where we have the following theorem. 

\begin{theorem}
\label{theo: FP-thermalizing}
Every FPT dynamical map for qubits satisfies the QFR \eqref{FluctRel}.
\end{theorem}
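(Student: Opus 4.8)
The plan is to reduce the qubit QFR to the single detailed-balance relation (\ref{aux1}) and then to extract that relation directly from the fixed-point property. Since $d=2$, the only nontrivial energy exchange is $\mathbbmss{E}=\omega:=E_2-E_1$ (taking $E_2>E_1$ without loss of generality); the case $\mathbbmss{E}=0$ forces $n=m$ through the delta functions in (\ref{heat-absorption-probability}) and (\ref{heat-release-probability}), so $P(+0;\tau)$ and $P(-0;\tau)$ reduce to identical diagonal sums and $R(0;\tau)=1=e^{\Delta\beta\cdot 0}$ holds trivially.

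First I would write out the absorption and release probabilities explicitly for $\mathbbmss{E}=\omega$. Here the delta functions select single terms, leaving $P(+\omega;\tau)=p_1(\beta_{\mathrm{i}})\,p(\ket{1}\rightarrow\ket{2};\tau)$ and $P(-\omega;\tau)=p_2(\beta_{\mathrm{i}})\,p(\ket{2}\rightarrow\ket{1};\tau)$. Forming the ratio and using $p_1(\beta_{\mathrm{i}})/p_2(\beta_{\mathrm{i}})=e^{\beta_{\mathrm{i}}\omega}$, the target identity $R(\omega;\tau)=e^{\Delta\beta\,\omega}$ becomes equivalent to $p(\ket{1}\rightarrow\ket{2};\tau)/p(\ket{2}\rightarrow\ket{1};\tau)=e^{-\beta_{\mathrm{f}}\omega}$, which is exactly relation (\ref{aux1}) for the two levels.

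The heart of the argument is then to establish this ratio from the FPT property. I would use the trace preservation of $\mathpzc{G}_{\tau}$, which gives $\sum_n p(\ket{m}\rightarrow\ket{n};\tau)=1$ and hence $p(\ket{m}\rightarrow\ket{m};\tau)=1-p(\ket{m}\rightarrow\ket{n};\tau)$ for the two levels. Applying relation (ii), i.e.\ Eq.~(\ref{prop-}), with $m=1$ yields $p_1(\beta_{\mathrm{f}})\,p(\ket{1}\rightarrow\ket{1};\tau)+p_2(\beta_{\mathrm{f}})\,p(\ket{2}\rightarrow\ket{1};\tau)=p_1(\beta_{\mathrm{f}})$; substituting the normalization identity collapses this to $p_1(\beta_{\mathrm{f}})\,p(\ket{1}\rightarrow\ket{2};\tau)=p_2(\beta_{\mathrm{f}})\,p(\ket{2}\rightarrow\ket{1};\tau)$, which is precisely (\ref{aux1}). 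By remark (i) this immediately gives the QFR (\ref{FluctRel}).

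The main obstacle to anticipate is not in the qubit computation, which is essentially forced, but in making clear why the argument is confined to $d=2$. The decisive structural fact is that in two dimensions a single scalar constraint---the $m=1$ component of the fixed-point condition (\ref{prop-})---together with the normalization $\sum_n p(\ket{m}\rightarrow\ket{n};\tau)=1$ completely determines the off-diagonal transition ratio. For $d>2$, relation (\ref{prop-}) only furnishes summed constraints involving many distinct off-diagonal transitions and cannot pin down each detailed-balance relation (\ref{aux1}) individually, so the qubit result is genuinely special rather than a restriction of a general-$d$ statement. I would flag this point explicitly so the reader understands the role of the dimension.
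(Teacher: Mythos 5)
Your proof is correct and follows essentially the same route as the paper's: apply the fixed-point identity \eqref{prop-} with $m=1$, use the normalization $p(\ket{1}\rightarrow\ket{1};\tau)=1-p(\ket{1}\rightarrow\ket{2};\tau)$ to collapse it to the detailed-balance relation \eqref{aux1}, and conclude via remark (i). The extra material you include---the explicit reduction of the qubit QFR to \eqref{aux1} and the discussion of why the argument fails for $d>2$---is a sound elaboration of what the paper leaves implicit, not a different approach.
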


\begin{proof}
We have
\begin{equation*}
p_{1}({\beta_{\mathrm{f}}})\overset{(\ref{prop-})}{=}p(\ket{1}\rightarrow\ket{1};\tau)\,p_{1}(\beta_{\mathrm{f}})+p(\ket{2}\rightarrow\ket{1};\tau)\,p_{2}(\beta_{\mathrm{f}}).
\end{equation*}
However, in the case of a qubit, $p(\ket{1}\rightarrow\ket{1};\tau)=1-p(\ket{1}\rightarrow\ket{2};\tau)$; whereby the above equation fulfills Eq. (\ref{aux1}).
\end{proof}

Before progressing further, it is helpful to make some remarks and summarize our findings thus far. (i) The QFR \eqref{FluctRel} can also be formulated for discrete-time dynamics, where both of the above theorems will still apply.  (ii) In contrast to Ref. \cite{2018-Ramezani}, here we have not restricted to the case of time-independent Hamiltonians. Since in such cases one cannot unambiguously associate the variation of energy in the open system solely to heat, we have always referred to ``energy exchange" (rather than heat exchange). (iii) Although Ref. \cite{2018-Ramezani} assumed a Markovian Lindblad form for the thermalizing dynamics, thus far we have not assumed any particular type of open-system dynamics. Yet, we have shown that the dynamics suffices to be thermalizing to enforce the QFR at least \textit{asymptotically}; and further, if it has the extra property that its asymptotic thermal state is also its fixed point this will guarantee the \textit{finite-time} QFR at least for qubits. Although Theorems \ref{theo: infinity} and \ref{theo: FP-thermalizing} show some cases of thermalizing dynamics where the QFR (\ref{FluctRel}) holds, it still remains to find general sufficient conditions for an open-system dynamics to fulfill the QFR for finite times. This is exactly where we employ the QBD condition. 

\section{Quantum detailed balance (QBD)}
\label{sec: DetBal}

Among numerous existing extensions of the classical detailed balance conditions to quantum systems, we shall follow the general approach proposed in Refs. \cite{2007-Fagnola,2008-Fagnola,2010a-Fagnola,2010b-Fagnola} for their generality (see Ref. \cite{thesis} for a review). This is based on turning the algebra of observables $M_d(\mathbb{C})$ into a $d^2$-dimensional Hilbert space $\mathpzc{H}_{\,\Sigma,s}$ by means of the scalar product
\begin{equation}
\langle\hskip-.6mm\langle A,B \rangle\hskip-.6mm\rangle_{s}=\mathrm{Tr}\big[\Sigma^{1-s}A^{\dag}\Sigma^s B\big], \, A,B\in M_d(\mathbb{C}),
\label{eq:scalar-product}
\end{equation}
where $s\in[0,1]$ and $\Sigma$ is a given full-rank \textit{reference state} (i.e., $\Sigma>0$). This scalar product makes the matrix algebra $M_d(\mathbb{C})$ a Hilbert space $\mathpzc{H}_{\,\Sigma,s}$ (isomorphic to $\mathbb{C}^{d^2}$). In addition, given a linear map $\mathpzc{O}$ on $\mathpzc{H}_{\,\Sigma,s}$, one can define its \textit{adjoint} $\mathpzc{O}^{\star}$ relative to this scalar product by
\begin{equation}
\label{adjoint}
\langle\hskip-.6mm\langle A,\mathpzc{O}[B]\rangle\hskip-.6mm\rangle_s=\langle\hskip-.6mm\langle \mathpzc{O}^{\star}[A],B\rangle\hskip-.6mm\rangle_s.
\end{equation}
Note the difference in notation between the adjoint operation ``$\dag$" with respect to the standard scalar product $\langle\, ,\rangle$ on $\mathbb{C}^d$ and the adjoint operation ``$\star$" with respect to the scalar product $\langle\hskip-.6mm\langle\,,\rangle\hskip-.6mm\rangle_s$ on $M_d(\mathbb{C})$.

The first QDB condition refers to a dynamical \textit{semigroup} map $\mathpzc{G}_{\tau}=e^{\tau\mathpzc{L}}$ with the generator in the Lindblad form (\ref{Lindblad1}) and its dual map $\mathpzc{G}^{\sharp}_{\tau}=e^{\tau \mathpzc{L}^{\sharp}}$ with the generator (\ref{Lindblad1.2}).
\medskip

\textbf{Definition 1.} Let $\mathpzc{L}^{\sharp\star}$ be the adjoint of $\mathpzc{L}^{\sharp}$ in Eq. \eqref{Lindblad1.2} with respect to the scalar product \eqref{eq:scalar-product}. We say a dynamical map $\mathpzc{G}_{\tau}=e^{\tau \mathpzc{L}}$ (or equivalently $\mathpzc{G}^{\sharp}_{\tau}=e^{\tau \mathpzc{L}^{\sharp}}$) has the QDB property with respect to a reference state $\Sigma>0$ if
\begin{equation}
\label{QDB1}
\mathpzc{L}^{\sharp}[A]-\mathpzc{L}^{\sharp\star}[A]=2i[H,A],\,\forall A\in M_d(\mathbb{C}).
\end{equation}

It is straightforward to see that such a requirement is satisfied (with a vanishing right-hand side) in the case of the generator of a classical Pauli equation---of which Eq. \eqref{QDB1} is a quantum generalization accounting for the presence of a contribution coming from the commutator with a Hamiltonian $H$. In addition, an immediate consequence of this condition (if \blue{it} holds) is that the reference state must be $\mathpzc{G}_{\tau}$-invariant. This can be seen as follows. Equation \eqref{Lindblad1.2} implies $\mathpzc{L}^{\sharp}[\mathbbmss{I}]=0$, which in turn combined with the above QDB condition gives $\mathpzc{L}^{\sharp\star}[\mathbbmss{I}]=0$. Now, if we replace $A=\mathbbmss{I}$ and $\mathpzc{O}=\mathpzc{L}^{\sharp}$ in Eq. \eqref{adjoint}, we obtain
\begin{equation*}
\mathrm{Tr}\big[\Sigma\,\mathpzc{L}^{\sharp}[B]\big]=\mathrm{Tr}\big[\Sigma^{1-s}\mathpzc{L}^{\sharp\star}[\mathbbmss{I}]\,\Sigma^s B\big]=0,\,\forall B\in M_d(\mathbb{C}).
\end{equation*}
This yields that $\mathpzc{L}[\Sigma]=0$, that is, $\mathpzc{G}_{\tau}[\Sigma]=\Sigma$.

Since not all dynamical maps have the semigroup property, it is important to consider a second QDB condition which does not refer to the semigroup properties of a dynamical map $\mathpzc{G}_{\tau}$, but only to its behavior with respect to the \textit{time-reversal operation} $\mathpzc{T}$ defined as
\begin{equation}
\label{Trevaut}
\mathpzc{T}[A]=\Theta A^{\dag}\Theta^{\dag},\,\forall A\in M_{d}(\mathbb{C}),
\end{equation}
where $\Theta$ is the \textit{time-reversal operator} \cite{book:Sakurai}. We give a brief review of the definitions and properties of these operations in appendix \ref{app:Theta}.
\medskip

\textbf{Definition 2.} A dynamical map $\mathpzc{G}^{\sharp}_{\tau}$ (in the Heisenberg picture) is said to have the QDB property with respect to a reference state $\Sigma$ if
\begin{equation}
\label{QDB2}
\langle\hskip-.6mm\langle A^{\dag},\mathpzc{G}^{\sharp}_{\tau}[B]\rangle\hskip-.6mm\rangle_s=\langle\hskip-.6mm\langle\mathpzc{T}[B^{\dag}],\mathpzc{G}^{\sharp}_{\tau}\big[\mathpzc{T}[A]\big]\rangle\hskip-.6mm\rangle_s,\,\forall A,B\in M_d(\mathbb{C}).
\end{equation}
Such a condition is based on the principle of \textit{microreversibility} that links the equilibrium probabilities of forward-time processes with those of their backward-time or time-reversed images. 

In summary, the first QDB condition \eqref{QDB1} relies on the semigroup structure of the dynamics and imposes a constraint on its generator; whereas  the second condition \eqref{QDB2} concerns general dynamical maps (independently of any composition law possibly holding among them) but employing a time-reversal linear map. It is interesting to see that how the second definition (\ref{QDB2}) compares with Eq. (\ref{QDB1}) when the dynamics is governed by a CPTP \textit{semigroup} map generated by $\mathpzc{L}^{\sharp}$. It has been shown that the condition \eqref{QDB2} matches the condition \eqref{QDB1} if the dissipative part of the generator, i.e., $\mathpzc{L}^{\sharp}[\cdot] - i[H,\cdot]$, and the Hamiltonian are both invariant under time-reversal \cite{2008-Fagnola}. Nevertheless, note that inserting $A=\mathbbmss{I}$ into Eq. \eqref{QDB2} yields $\mathrm{Tr}\big[\Sigma\,\mathpzc{G}^{\sharp}_{\tau}[B]\big]=\mathrm{Tr}\big[\Sigma\,\mathpzc{T}[B]\big]$ which---unlike the comment after Definition 1---does not imply $\mathpzc{G}_{\tau}$-invariance of the reference state $\Sigma$.

\section{QFR and QDB}
\label{sec: FluctDet}

In this section we study the relations between the two QDB conditions of Sec. \ref{sec: DetBal} and the QFR \eqref{FluctRel} and prove our main results in two theorems. We shall first consider the case of dynamical semigroup maps $\mathpzc{G}_{\tau}=e^{\tau\mathpzc{L}}$ and next the case of generic CPTP maps $\mathpzc{G}_{\tau}$. The reference state in the QDB conditions will be chosen to be the asymptotic thermal state, i.e., $\Sigma = \varrho^{(\beta_\mathrm{f})}$.

\subsection{QFR for dynamical semigroups}
\label{subsec:fr-d}

Let us decompose $\mathpzc{L}^{\sharp}$ as $\mathpzc{L}^{\sharp}=\mathpzc{L}^{\sharp}_{H}+\mathpzc{L}^{\sharp}_{D}$, where 
\begin{align}
\label{decL}
\mathpzc{L}^{\sharp}_{H}=&\frac{1}{2}(\mathpzc{L}^{\sharp}-\mathpzc{L}^{\sharp\star})\overset{(\ref{Lindblad1.2})}{=}i[H,\,],\\
\mathpzc{L}^{\sharp}_{D}=&\frac{1}{2}(\mathpzc{L}^{\sharp}+\mathpzc{L}^{\sharp\star}),
\end{align}
thus 
$\mathpzc{L}_{H}^{\sharp\star}=-\mathpzc{L}^{\sharp}_{H}$ and $\mathpzc{L}_{D}^{\sharp\star}=\mathpzc{L}^{\sharp}_{D}$.

Before stating our main results (Theorems \ref{TheoQDB1} and \ref{TheoQDB2}), we first prove some useful results.

\textbf{Lemma.} Let $\mathpzc{K}$ be a linear map on $\mathpzc{H}_{\,\Sigma,s}$. 
\begin{enumerate}
\item If $\mathpzc{K}$ is self-adjoint with respect to the scalar product \eqref{adjoint} and $(\mathpzc{K}[A])^{\dag}=\mathpzc{K}[A^{\dag}]$ ($\forall A\in M_{d}(\mathbb{C})$), then 
\begin{equation}
\label{eq:lemma-self-adjoint}
e^{-\beta_{\mathrm{f}}E_{m}}\langle m|\mathpzc{K}[|n\rangle\langle n|]|m\rangle=e^{-\beta_{\mathrm{f}}E_{n}}\langle n|\mathpzc{K}[|m\rangle\langle m|]|n\rangle.
\end{equation} 

\item If $(\mathpzc{K}[A])^{\dag}=\mathpzc{K}[A^{\dag}]$ and $(\mathpzc{K}^{\star}[A])^{\dag}=\mathpzc{K}^{\star}[A^{\dag}]$ ($\forall A\in M_{d}(\mathbb{C})$), then for the linear map $\displaystyle\mathpzc{R}_{\,s}[\cdot]=\Sigma^{1-2s}\cdot\Sigma^{2s-1}$ from $\mathpzc{H}_{\,\Sigma,s}$ into itself satisfies $\mathpzc{K}\circ \mathpzc{R}_{\,s}=\mathpzc{R}_{\,s}\circ \mathpzc{K}$.

\item Let $\mathpzc{G}^{\sharp}_{\tau}=e^{\tau \mathpzc{L}^{\sharp}}$ be a dynamical semigroup map with self-adjoint Lindblad generator $\mathpzc{L}^{\sharp}$. Let $\Sigma$ be a full-rank state with eigenvectors $\{\ket{m}\}_{m=1}^{d}$, $\mathpzc{H}_{\,\Sigma,s}$ the Hilbert space with the scalar product \eqref{eq:scalar-product}. Then the $\mathpzc{R}_{\,s}$-invariant subspace generated by the operators $\{\ket{m}\bra{m}\}_{m=1}^{d}$ and its complement (the subspace orthogonal to the former) are left-invariant by $\mathpzc{G}^{\sharp}_{\tau}$.

\end{enumerate}

\begin{proof}
\label{lemma:self-adjoint}

1. Replace $A=\ket{m}\bra{m}$ and $B=\ket{n}\bra{n}$ in Eq. \eqref{adjoint}, where $H\ket{m}=E_m\ket{m}$ and $\Sigma=\varrho^{(\beta_{\mathrm{f}})}$. 

2. We adapt the argument presented in Ref. \cite{1976-Alicki}. For arbitrary $A,B\in\mathpzc{H}_{\,\Sigma,s}$ we have
\begin{widetext}
\begin{align*}
\langle\hskip-.6mm\langle \mathpzc{K}\circ\mathpzc{R}_{\,s}[A],B\rangle\hskip-.6mm\rangle_{s}=&\langle\hskip-.6mm\langle\mathpzc{R}_{\,s}[A],\mathpzc{K}^{\star}[B]\rangle\hskip-.6mm\rangle_{s}=\mathrm{Tr}\big[\Sigma^{1-s}(\Sigma^{1-2s} A \Sigma^{2s-1})^{\dag}\Sigma^{s} \mathpzc{K}^{\star}[B]\big]=\mathrm{Tr}\big[\Sigma^{s} A^{\dag} \Sigma^{1-s}\mathpzc{K}^{\star}[B]\big]\\
=& \mathrm{Tr}\big[\Sigma^{1-s}(\mathpzc{K}^{\star}[B^{\dag}])^{\dag}\Sigma^{s}A^{\dag}]=\langle\hskip-.6mm\langle \mathpzc{K}^{\star}[B^{\dag}],A^{\dag}\rangle\hskip-.6mm\rangle_s=\langle\hskip-.6mm\langle B^{\dag},\mathpzc{K}[A^{\dag}]\rangle\hskip-.6mm\rangle_s=\mathrm{Tr}\big[\Sigma^{1-s}B\Sigma^{s}\,\mathpzc{K}[A^{\dag}]\big]. 
\end{align*}
Similarly,
\begin{align*}
\langle\hskip-.6mm\langle \mathpzc{R}_{\,s}\circ \mathpzc{K}[A],B\rangle\hskip-.6mm\rangle_{s}=\mathrm{Tr}\big[\Sigma^{1-s}\big(\Sigma^{1-2s}(\mathpzc{K}[A])\Sigma^{2s-1}\big)^{\dag}\Sigma^{s}B\big]= \mathrm{Tr}\big[\Sigma^{s}\mathpzc{K}[A^{\dag}]\Sigma^{1-s}B\big]= \mathrm{Tr}\big[\Sigma^{1-s}B\Sigma^{s}\mathpzc{K}[A^{\dag}]\big],
\end{align*}
\end{widetext}
 which coincides with the previous equation.

3. Let $\Sigma\ket{m}=h_{m}\ket{m}$, where $\Sigma$ defines the QDB condition \eqref{QDB1}. We have
\begin{equation}
\mathpzc{R}_{\,s}[\ket{m}\bra{n}]=(h_{n}/h_{m})^{2s-1}\ket{m}\bra{n},
\end{equation}
that is, the operators $|m\rangle\langle n|$ are the eigenoperators of $\mathpzc{R}_{\,s}$. These operators are orthogonal in the sense that $\langle\hskip-.6mm\langle |m\rangle\langle n|,|m'\rangle\langle n'|\rangle\hskip-.6mm\rangle_{s}\propto \delta_{mm'}\delta_{nn'}$, and the $\mathpzc{R}_{\,s}$-invariant subspace of $\mathpzc{H}_{\,\Sigma,s}$ corresponding to the eigenvalue $1$ is spanned by the eigenoperators $\{\ket{m}\bra{m}\}_{m=1}^{d}$. Since $\mathpzc{G}^{\sharp}_{\tau}$ preserves Hermiticity, part (2) of the lemma ensures that $\mathpzc{G}^{\sharp}_{\tau}\circ\mathpzc{R}_{\,s}=\mathpzc{R}_{\,s}\circ \mathpzc{G}^{\sharp}_{\tau}$ so that the invariant subspace of $\mathpzc{R}_{\,s}$ is mapped into itself by the dynamics as well as its orthogonal subspace linearly spanned by the eigenoperators $\ket{n}\bra{m}$ ($n\neq m$).
\end{proof}

\begin{theorem}
\label{TheoQDB1}
For a dynamical semigroup map $\mathpzc{G}^{\sharp}_{\tau}$, if its Lindblad generator $\mathpzc{L}^{\sharp}$ satisfies the first QDB condition \eqref{QDB1} with respect to the thermal state $\varrho^{(\beta_{\mathrm{f}})}$, then the QFR (\ref{FluctRel}) holds for all $\tau\geqslant 0$. 
\end{theorem}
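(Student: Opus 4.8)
The plan is to reduce the QFR \eqref{FluctRel} to the transition-probability detailed balance \eqref{aux1}, which by remark (i) immediately gives the result, and then to extract \eqref{aux1} from the QDB condition \eqref{QDB1} by means of part 1 of the Lemma. First I would pass to the Heisenberg picture: combining the duality \eqref{eq:Shro-Heis} with \eqref{eq:transition-probability} lets me rewrite the transition probability as $p(\ket{m}\to\ket{n};\tau)=\bra{m}\mathpzc{G}^{\sharp}_{\tau}[\ket{n}\bra{n}]\ket{m}$. Fixing the reference state to $\Sigma=\varrho^{(\beta_{\mathrm{f}})}$, the target identity \eqref{aux1} is then \emph{literally} the conclusion \eqref{eq:lemma-self-adjoint} of part 1 of the Lemma with $\mathpzc{K}=\mathpzc{G}^{\sharp}_{\tau}$, since $\langle m|\mathpzc{G}^{\sharp}_{\tau}[\ket{n}\bra{n}]|m\rangle=p(\ket{m}\to\ket{n};\tau)$ and $\langle n|\mathpzc{G}^{\sharp}_{\tau}[\ket{m}\bra{m}]|n\rangle=p(\ket{n}\to\ket{m};\tau)$. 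So everything reduces to checking the hypotheses of part 1 for $\mathpzc{G}^{\sharp}_{\tau}$: Hermiticity preservation (automatic, as $\mathpzc{G}^{\sharp}_{\tau}$ is the dual of a CPTP map) and self-adjointness with respect to the scalar product \eqref{eq:scalar-product}.

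The hard part will be that $\mathpzc{G}^{\sharp}_{\tau}=e^{\tau\mathpzc{L}^{\sharp}}$ is \emph{not} self-adjoint, so part 1 of the Lemma cannot be applied to it directly. Indeed, imposing \eqref{QDB1} fixes the splitting \eqref{decL}, $\mathpzc{L}^{\sharp}=\mathpzc{L}^{\sharp}_{H}+\mathpzc{L}^{\sharp}_{D}$, in which $\mathpzc{L}^{\sharp}_{D}$ is self-adjoint but the Hamiltonian part $\mathpzc{L}^{\sharp}_{H}=i[H,\cdot]$ is \emph{anti}-self-adjoint, whence $\mathpzc{L}^{\sharp\star}=-\mathpzc{L}^{\sharp}_{H}+\mathpzc{L}^{\sharp}_{D}\neq\mathpzc{L}^{\sharp}$. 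The observation that unlocks the argument is that this obstruction lives entirely in the coherences: because $[H,\ket{m}\bra{m}]=0$, the Hamiltonian part annihilates every population operator, so $\mathpzc{L}^{\sharp}$ and the genuinely self-adjoint map $\mathpzc{L}^{\sharp}_{D}$ act identically on the diagonal subspace spanned by $\{\ket{m}\bra{m}\}_{m=1}^{d}$.

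To make this rigorous I would invoke part 3 of the Lemma applied to the self-adjoint generator $\mathpzc{L}^{\sharp}_{D}$: it guarantees that the diagonal ($\mathpzc{R}_{\,s}$-eigenvalue-one) subspace and its orthogonal complement of coherences are each left invariant by $e^{\tau\mathpzc{L}^{\sharp}_{D}}$. Since $\mathpzc{L}^{\sharp}_{H}$ trivially preserves this subspace (mapping it to zero), the full generator $\mathpzc{L}^{\sharp}=\mathpzc{L}^{\sharp}_{H}+\mathpzc{L}^{\sharp}_{D}$ preserves it as well, and on it $\mathpzc{L}^{\sharp}=\mathpzc{L}^{\sharp}_{D}$. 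Consequently $\mathpzc{G}^{\sharp}_{\tau}$ and the self-adjoint, Hermiticity-preserving semigroup $e^{\tau\mathpzc{L}^{\sharp}_{D}}$ coincide on the diagonal subspace, so in particular
\begin{equation*}
\bra{m}\mathpzc{G}^{\sharp}_{\tau}[\ket{n}\bra{n}]\ket{m}=\bra{m}e^{\tau\mathpzc{L}^{\sharp}_{D}}[\ket{n}\bra{n}]\ket{m},\quad\forall\, m,n.
\end{equation*}
Applying part 1 of the Lemma to $e^{\tau\mathpzc{L}^{\sharp}_{D}}$ then yields \eqref{aux1} for the matrix elements on the left, and remark (i) delivers \eqref{FluctRel} for every $\tau\geqslant0$.

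The main obstacle, as stressed above, is precisely the non-self-adjointness of $\mathpzc{L}^{\sharp}$; the argument goes through only because the transition probabilities entering \eqref{aux1} involve exclusively populations, on which the troublesome Hamiltonian commutator is inert, and because the invariance statement of part 3 of the Lemma forbids any leakage of populations into coherences along the dynamics. I would therefore expect the careful separation of the diagonal block from the coherences—rather than any single computation—to be the delicate point to get right.
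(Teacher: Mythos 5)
Your proposal is correct and follows essentially the same route as the paper: reduce the QFR to Eq.~\eqref{aux1}, split $\mathpzc{L}^{\sharp}=\mathpzc{L}^{\sharp}_{H}+\mathpzc{L}^{\sharp}_{D}$, observe that $\mathpzc{L}^{\sharp}_{H}$ annihilates the eigenprojectors so that $\mathpzc{G}^{\sharp}_{\tau}$ agrees with the self-adjoint semigroup $e^{\tau\mathpzc{L}^{\sharp}_{D}}$ on the diagonal subspace, and then apply parts (1) and (3) of the Lemma. The only cosmetic difference is that you justify $\mathpzc{G}^{\sharp}_{\tau}[\ket{n}\bra{n}]=e^{\tau\mathpzc{L}^{\sharp}_{D}}[\ket{n}\bra{n}]$ by an invariant-subspace argument where the paper invokes the Lie--Trotter formula; both are valid.
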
 

\begin{proof}
On the one hand, since $\mathpzc{L}^{\sharp}_{D}$ is self-adjoint on $\mathpzc{H}_{\,\Sigma,s}$, part (3) of the above lemma yields, for some $f_{m}\in\mathbb{R}$, 
\begin{equation}
\label{eq:adjoint-on-diagonal}
\mathpzc{L}^{\sharp}_{D}[\ket{n}\bra{n}]=\sum_{m=1}^{d} f_{m}\ket{m}\bra{m}.
\end{equation}
On the other hand, $\mathpzc{L}^{\sharp}_{H}[\ket{n}\bra{n}]=0$. Thus, using the Lie-Trotter relation \cite{book:Nielsen}
\begin{equation}
\mathpzc{G}^{\sharp}_{\tau}=e^{\tau(\mathpzc{L}^{\sharp}_{H} + \mathpzc{L}^{\sharp}_{D})}=\lim_{k\to\infty}\big(e^{(\tau/k) \mathpzc{L}^{\sharp}_H} e^{(\tau/k) \mathpzc{L}^{\sharp}_D}\big)^k,
\end{equation}
it follows that $\mathpzc{G}^{\sharp}_{\tau}[\ket{n}\bra{n}]=e^{\tau \mathpzc{L}^{\sharp}_{D}}[\ket{n}\bra{n}]$. Additionally, since $\mathpzc{L}^{\sharp}_{D}$ is a self-adjoint operator on $\mathpzc{H}_{\,\Sigma,s}$, such is $\mathpzc{G}^{\sharp}_{\tau}$; then part (1) of the above lemma---Eq. (\ref{eq:lemma-self-adjoint})---applies, 
\begin{equation*}
\bra{m} \mathpzc{G}^{\sharp}_{\tau}[\ket{n}\bra{n}]\ket{m}=e^{-\beta_{\mathrm{f}}(E_{n}-E_{m})}\bra{n}\,\mathpzc{G}^{\sharp}_{\tau}[\ket{m}\bra{m}]\ket{n},
\end{equation*}
whence Eq. \eqref{aux1} (and the QFR) holds.
\end{proof}

\subsection{QFR for general dynamical maps}
\label{subsec:fr-dgen}

Let us restrict ourselves to systems whose Hamiltonians are invariant under time-reversal, $\mathpzc{T}[H]=H$. It is then straightforward to see that this property carries over to all eigenprojectors $\ket{m}\bra{m}$ of the Hamiltonian $H$ too, 
\begin{equation}
\label{timerevaux0}
\mathpzc{T}[\ket{m}\bra{m}]=\ket{m}\bra{m}.
\end{equation}
This can be shown by noting $\mathpzc{T}[H^{k}]=(\mathpzc{T}[H])^{k}$ ($\forall k\in\mathbb{N}$) and employing the resolvent representation of the eigenprojectors as $\ket{m}\bra{m}=(2\pi i)^{-1} \oint_{c_m}(z\mathbbmss{I}-H)^{-1}\,\mathrm{d}z$, where $c_{m}$ is a circle centered around the eigenvalue $E_m$ of $H$ without encircling or passing over any other eigenvalue \cite{book:Hassani}. 

\begin{theorem}
\label{TheoQDB2}
If a dynamical map $\mathpzc{G}^{\sharp}_{\tau}$ satisfies the second QDB condition \eqref{QDB2} with respect to the thermal state $\varrho^{(\beta_{\mathrm{f}})}$ and the system Hamiltonian is invariant under time-reversal $\mathpzc{T}$, then the QFR (\ref{FluctRel}) holds for any time $\tau\geqslant 0$.
\end{theorem}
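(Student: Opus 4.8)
The plan is to reduce the statement to the symmetry relation \eqref{aux1}, which we already know implies the QFR \eqref{FluctRel}. The bridge between the abstract QDB condition \eqref{QDB2} and the transition probabilities \eqref{eq:transition-probability} will be a judicious choice of the operators $A,B$ in \eqref{QDB2}, namely the Hamiltonian eigenprojectors themselves. First I would rephrase the transition probability in the Heisenberg picture: using the duality \eqref{eq:Shro-Heis} one finds $p(\ket{m}\to\ket{n};\tau)=\bra{m}\mathpzc{G}^{\sharp}_{\tau}[\ket{n}\bra{n}]\ket{m}$, so that everything can be written in terms of $\mathpzc{G}^{\sharp}_{\tau}$, which is precisely the object appearing in \eqref{QDB2}.

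Next I would insert $A=\ket{m}\bra{m}$ and $B=\ket{n}\bra{n}$ (both Hermitian) into \eqref{QDB2}. On the left-hand side the scalar product \eqref{eq:scalar-product} with $\Sigma=\varrho^{(\beta_{\mathrm{f}})}$ reads $\mathrm{Tr}\big[\Sigma^{1-s}\ket{m}\bra{m}\Sigma^s\,\mathpzc{G}^{\sharp}_{\tau}[\ket{n}\bra{n}]\big]$. Because $\ket{m}$ is an eigenvector of $\Sigma$ with eigenvalue $p_m(\beta_{\mathrm{f}})=e^{-\beta_{\mathrm{f}}E_m}/\mathrm{Tr}[e^{-\beta_{\mathrm{f}}H}]$, the two factors $\Sigma^{1-s}$ and $\Sigma^s$ collapse to a single weight $p_m(\beta_{\mathrm{f}})$, independent of $s$, so the left-hand side equals $p_m(\beta_{\mathrm{f}})\,p(\ket{m}\to\ket{n};\tau)$. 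The key point is that the $s$-dependence cancels precisely because the arguments are diagonal in the eigenbasis of $\Sigma$.

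Finally I would treat the right-hand side of \eqref{QDB2}, namely $\langle\hskip-.6mm\langle\mathpzc{T}[\ket{n}\bra{n}],\mathpzc{G}^{\sharp}_{\tau}[\mathpzc{T}[\ket{m}\bra{m}]]\rangle\hskip-.6mm\rangle_s$. This is where the hypothesis $\mathpzc{T}[H]=H$ enters: by \eqref{timerevaux0} it forces $\mathpzc{T}[\ket{m}\bra{m}]=\ket{m}\bra{m}$, so the time-reversal acts trivially and, by the same eigenvalue argument as above with $m$ and $n$ interchanged, the right-hand side reduces to $p_n(\beta_{\mathrm{f}})\,p(\ket{n}\to\ket{m};\tau)$. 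Equating the two sides yields exactly \eqref{aux1}, hence the QFR \eqref{FluctRel} for every $\tau\geqslant0$.

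The computation itself is mechanical; the substantive content sits in two places. First, the choice of diagonal eigenprojectors is what makes the $s$-weighted scalar product degenerate into the bare thermal weights required by \eqref{aux1}; a generic $A,B$ would leave an $s$-dependent, off-diagonal remainder and no clean statement. Second, the time-reversal invariance of the Hamiltonian is doing essential work: without \eqref{timerevaux0} the right-hand side would carry $\mathpzc{G}^{\sharp}_{\tau}$ evaluated on the time-reversed projectors $\mathpzc{T}[\ket{m}\bra{m}]$, which need not equal $\ket{m}\bra{m}$, and the reverse transition probability $p(\ket{n}\to\ket{m};\tau)$ would fail to appear. I would therefore expect the only genuine care to be in checking that the antiunitary nature of $\Theta$ causes no trouble, but since $\mathpzc{T}$ is applied solely to the Hermitian eigenprojectors, \eqref{timerevaux0} disposes of this point entirely.
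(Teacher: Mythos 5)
Your proposal is correct and follows essentially the same route as the paper: substituting $A=\ket{m}\bra{m}$, $B=\ket{n}\bra{n}$ into Eq.~\eqref{QDB2}, invoking Eq.~\eqref{timerevaux0} to dispose of the time-reversal, and collapsing the $s$-weighted scalar product on the $\Sigma$-eigenprojectors to obtain Eq.~\eqref{aux1}. The only difference is that you spell out the intermediate bookkeeping (the Heisenberg-picture form of the transition probability and the cancellation of the $s$-dependence) that the paper leaves implicit.
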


\begin{proof}
Setting $A=\ket{m}\bra{m}$ and $B=\ket{n}\bra{n}$ in Eq. \eqref{QDB2} with $H\ket{m}=E_m\ket{m}$, and using Eq. \eqref{timerevaux0}, one obtains
\begin{equation*}
\bra{m}\mathpzc{G}^{\sharp}_{\tau}[|n\rangle \langle n|]\ket{m}=e^{-\beta_{\mathrm{f}}(E_{n}-E_{m})}\bra{n}\mathpzc{G}^{\sharp}_{\tau}[\ket{m}\bra{m}]\ket{n},
\end{equation*}
which a case where Eq. (\ref{aux1}) applies.
\end{proof} 

A caveat is in order here. Although from Theorems \ref{TheoQDB1} and \ref{TheoQDB2} we see that when a thermalizing dynamics has the QDB property (in either forms), the QFR \eqref{FluctRel} holds, the converse is not necessarily valid. In fact, in the next section we present an example showing that the QDB and QFR are not equivalent to each other because one may have thermalization without the QDB condition.

\section{Examples}
\label{sec: Examples}

In the following, three examples are presented to highlight our results. The first example concerns a qubit dynamics which is thermalizing but not a semigroup with a Lindblad generator. In fact, this dynamics possesses an asymptotic thermal state which is not time-invariant (namely, not a fixed point of the dynamics). We show that the QFR does not hold; rather, a time-dependent correction appears in the ratio $R(\mathbbmss{E};\tau)$, which disappears asymptotically---in agreement with Theorem \ref{theo: infinity}. The second example is based on the so-called quantum optical master equation, which describes a two-level atom in interaction with the quantized electromagnetic field, the latter acting as a thermal environment at inverse temperature $\beta_{\mathrm{f}}$. We show that the resulting dynamics is FPT, thus Theorem \ref{theo: FP-thermalizing} applies in this case and the QFR holds. Moreover, such a dynamics respects the QDB condition, so that one can equivalently argue the validity of the QFR from the results of the previous section. The last example demonstrates that the QFR and QDB condition are not equivalent, providing an FPT semigroup dynamics which satisfies the QFR but fulfills neither QDB conditions.

\subsection{A thermalizing non-FPT dynamics}
\label{ex-1}

Here we consider an example of a \textit{non-Markovian} thermalizing map acting as a qubit ``generalized amplitude damping channel" \cite{book:Nielsen}. In particular, we show how by tuning some parameters of the dynamics in a suitable manner it is possible to construct a dynamics which is thermalizing but not FPT \cite{2017-Marcantoni}. In this case, it turns out that the QFR does not hold at finite times, whereas it is recovered asymptotically---as also expected from Theorem \ref{theo: infinity}. 

Consider a quantum operation (\ref{channel}) with
\begin{align*}
G_{\tau}^{(1)}&=\sqrt{q_{\tau}}\big(\ket{1}\bra{1}+\sqrt{1-\xi_{\tau}}\ket{2}\bra{2}\big),\\
G_{\tau}^{(2)}&=\sqrt{q_{\tau}\xi_{\tau}}\ket{1}\bra{2},\\
G_{\tau}^{(3)}&=\sqrt{1-q_{\tau}}\big(\sqrt{1-\xi_{\tau}}\ket{1}\bra{1}+\ket{2}\bra{2}\big),\\
G_{\tau}^{(4)}&=\sqrt{(1-q_{\tau})\xi_{\tau}}\ket{2}\bra{1},
\end{align*}  
where $q_{\tau},\xi_{\tau}\in[0,1]$. Given an initial state described by 
\begin{equation*}
\varrho(0)=d(0)\ket{1}\bra{1}+[1-d(0)] \ket{2}\bra{2}+k(0)\ket{1}\bra{2}\,+\,k^{*}(0)\ket{2}\bra{1},
\end{equation*} 
the state of the system at time $\tau$ becomes
\begin{align}
\varrho(\tau)=& d(\tau)\ket{1}\bra{1}+[1-d(\tau)] \ket{2}\bra{2}+k(\tau)\ket{1}\bra{2}\nonumber\\
& + k^{*}(\tau)\ket{2}\bra{1},
\end{align}
where
\begin{align}
d(\tau)&=(1-\xi_{\tau})d(0)+(1-q_{\tau})\xi_{\tau}\label{d-k-1}, \\
k(\tau)&=\sqrt{1-\xi_{\tau}}k(0).
\label{d-k-2}
\end{align}
Except for the constraint imposed by the initial condition, namely $\xi_{0}=0$, one can freely (but smoothly) adjust the parameters $q_{\tau}$ and $\xi_{\tau}$. As it is evident from Eqs. (\ref{d-k-1}) and (\ref{d-k-2}), one can impose a unique asymptotic state to exist for this dynamical map by means of the condition $\xi_{\infty}=1$; in other words, this condition guarantees that all the information related to the initial state $\varrho(0)$ is lost at long times. Moreover, requiring $q_{\infty}=(1/2)[1-\tanh(\beta_{\mathrm{f}}\omega/2)]$ ensures that the unique asymptotic state is indeed a thermal state at inverse temperature $\beta_{\mathrm{f}}$. Hence, such a dynamics is thermalizing but not FPT---unless we consider $q_{\tau}=q_\infty$ for any $\tau\geqslant 0$. In the following, however, we assume that $q_{\tau}$ is time-dependent.

From the transition probabilities
\begin{align}
p(\ket{1}\rightarrow\ket{2};\tau)&=\bra{2}\varrho(\tau|d(0)=1)\ket{2}=\xi_{\tau}q_{\tau},\\
p(\ket{2}\rightarrow\ket{1};\tau)&=\bra{1}\varrho(\tau|d(0)=0)\ket{1}=\xi_{\tau}(1-q_{\tau}),
\end{align}
we have
\begin{equation}
\frac{p(\ket{1}\rightarrow\ket{2};\tau)}{p(\ket{2}\rightarrow\ket{1};\tau)}=\frac{q_{\tau}}{1-q_{\tau}},
\end{equation}
which is independent of the parameter $\xi_{\tau}$, and in the limit $\tau\rightarrow\infty$ it tends to $e^{-\beta_{\mathrm{f}}(E_{2}-E_{1})}$. Introducing $f_{\tau}=q_{\infty}-q_{\tau}$ one obtains
\begin{equation}
R(\mathbbmss{E};\tau)=F(\tau)\,e^{\Delta\beta\,\mathbbmss{E}},
\label{Ft}
\end{equation}
where $F(\tau)=[1-f_{\tau}/q_{\infty}]/[1+f_{\tau}/(1-q_{\infty})]$. Note that $\lim_{\tau\to\infty}F(\tau)=1$, whence we retrieve $R(\mathbbmss{E};\infty)=e^{\Delta\beta\,\mathbbmss{E}}$---in agreement with Theorem \ref{theo: infinity}. Moreover, if $f_{\tau}=0\,\forall\tau$, the dynamics becomes FPT and---as expected---the QFR holds at finite times.

\subsection{An FPT dynamics satisfying the QDB condition}
\label{fpt-qbd}

In this example we consider a qubit evolving in time according to the so-called ``quantum optical master equation" \cite{2002-Breuer}
\begin{align}
\partial_{\tau} \varrho=& -i\frac{\omega}{2} \left[ \sigma_{z} , \varrho \right] +  \gamma \bar{n}(\omega,\beta) \big( \sigma_{+} \varrho \sigma_{-} - \frac{1}{2}\big\lbrace \sigma_{-} \sigma_{+}, \varrho \big\rbrace \big)\nonumber\\
&+ \gamma \big(\bar{n}(\omega,\beta) + 1\big)\big( \sigma_{-} \varrho\sigma_{+} - \frac{1}{2}\big\lbrace \sigma_{+} \sigma_{-}, \varrho \big\rbrace \big), \label{LindOpt}
\end{align}
where $\sigma_{\pm}= (1/2)(\sigma_{x} \pm i\sigma_{y})$, $\gamma$ is a positive damping rate, $\bar{n}(\omega,\beta_{\mathrm{f}}) = (e^{\beta_{\mathrm{f}} \omega }-1)^{-1}$ is the bosonic occupation number in thermal equilibrium, and we have dropped the $\tau$-dependence of $\varrho$ to lighten the notation. Note also that here we have used the convention $\sigma_{z}|i\rangle=(-1)^{i}|i\rangle$ ($i\in\{1,2\}$) in this example. This kind of dynamics is used, for example, to model a two-level atom interacting with a thermal bath of photons at inverse temperature $\beta_{\mathrm{f}}$ and has been widely studied in the literature. Here we investigate the validity of the QFR \eqref{FluctRel} and the QDB condition (\ref{QDB1}) for this dynamics.

First, we show that this dynamics is FPT, and hence the QFR holds because of Theorem \ref{theo: FP-thermalizing}. If we parametrize the state in the Bloch form as 
\begin{equation}
\varrho(\tau)=(1/2)\big[\mathbbmss{I}+\mathbf{r}(\tau)\cdot\bm{\sigma}\big],
\label{Bloch}
\end{equation}
where $\mathbf{r}=(r_{x},r_{y},r_{z})$ is a vector with $\Vert \mathbf{r}\Vert\leqslant 1$ and $\bm{\sigma}=(\sigma_{x},\sigma_{y},\sigma_{z})$, the solution of the dynamics is given by 
\begin{widetext}
\begin{align}
\varrho(\tau)= \frac{1}{2} \begin{pmatrix}1+r_{z}(0) e^{-\bar{\gamma}\tau}+ \tanh(\beta_{\mathrm{f}} \omega /2)[e^{-\bar{\gamma}\tau}-1] & [r_{x}(0)-i r_{y}(0)] e^{-(i\omega+\bar{\gamma}/2)\tau}\\ [r_{x}(0)+i r_{y}(0)] e^{-(-i\omega + \bar{\gamma}/2)\tau} & 1-r_{z}(0) e^{-\bar{\gamma}\tau}-\tanh(\beta_{\mathrm{f}} \omega/2) [e^{-\bar{\gamma}\tau}-1]
\end{pmatrix},
\end{align}
\end{widetext}
where we have set $\bar{\gamma}= \gamma \big[2 \bar{n}(\omega,\beta_{\mathrm{f}}) + 1\big]= \gamma \coth(\beta_{\mathrm{f}} \omega /2)$. From this solution it is immediate to see that this is a thermalizing dynamics, for any initial condition the system relaxes to a thermal state $\varrho(\infty)=\varrho^{(\beta_{\mathrm{f}})}$, corresponding to the Hamiltonian $H=(\omega/2)\sigma_{z}$.

In addition, note that the thermal state $\varrho^{(\beta_{\mathrm{f}})} $ is also stationary because the dynamics \eqref{LindOpt} obeys the semigroup composition law. That is, this dynamics is FPT, and hence the QFR is met due to Theorem \ref{theo: FP-thermalizing}. 

Alternatively, one could also have argued that the QFR is met because of Theorem \ref{TheoQDB1}. Indeed, according to Ref. \cite{2007-Fagnola}, in the qubit case, the most general semigroup of CPTP maps satisfying the QDB condition \eqref{QDB1} with respect to the state $\varrho^{(\beta_{\mathrm{f}})}$ is the solution of the following Lindblad master equation:
\begin{align}
\partial_{\tau} \varrho=& -i \frac{\omega}{2} \left[ \sigma_{z} , \varrho \right] + \mu e^{\beta_{\mathrm{f}} \omega}\big( \sigma_{-} \varrho \sigma_{+} - \frac{1}{2}\big\lbrace \sigma_{+} \sigma_{-}, \varrho \big\rbrace \big)\nonumber\\
&+ \mu \big( \sigma_{+} \varrho \sigma_{-} - \frac{1}{2}\big\lbrace \sigma_{-} \sigma_{+}, \varrho \big\rbrace \big) + \eta \big( \sigma_{z} \varrho \sigma_{z} - \varrho\big),
\label{LindQDB}
\end{align}
where $\mu$ and $\eta$ are positive parameters. One can thus clearly observe that Eq. \eqref{LindOpt} is a special case of the latter when $\eta=0$ and $\mu= \gamma \bar{n}(\omega,\beta_{\mathrm{f}})$---note that $\bar{n}(\omega,\beta_{\mathrm{f}})+1= \bar{n}(\omega,\beta_{\mathrm{f}}) e^{\beta_{\mathrm{f}} \omega}$. Thus, the quantum optical master equation describes a dynamics satisfying both the QDB condition and the QFR. 

The following example, however, provides an instance of dynamics where the QFR \eqref{FluctRel} is fulfilled but neither QDB conditions \eqref{QDB1} and \eqref{QDB2} hold.

\subsection{An FPT dynamics not satisfying the QDB condition}
\label{sec;fpt=!qdb}

This example demonstrates that the QDB condition is not equivalent to the QFR \eqref{FluctRel}. Indeed, in the following we present a dynamics for a qubit which is FPT---thus obeying the QFR according to Theorem \ref{theo: FP-thermalizing}---but does not satisfy the QDB condition. In doing so, it is more convenient to vectorize the state of the system as $| \varrho \rangle \equiv (1,r_{x},r_{y},r_{z})$. Any linear operation acting on $\varrho$ can then be represented as a $4 \times 4$ matrix acting on the vector $| \varrho \rangle$.

As already mentioned in the first example (\ref{ex-1}), according to Ref. \cite{2007-Fagnola}, in the qubit case, the most general semigroup of CPTP maps satisfying the QDB condition \eqref{QDB1} with respect to the state $\varrho^{(\beta_{\mathrm{f}})}$ is the solution of the Lindblad master equation \eqref{LindQDB}. This equation can be recast as
\begin{equation}
\partial_{\tau}| \varrho(\tau) \rangle =-2 \mathbbmss{L} | \varrho(\tau) \rangle,
\label{masterQDB}
\end{equation}
where
\begin{equation}
\label{aux1a}
\mathbbmss{L}= \begin{pmatrix}
0 & 0 & 0 & 0       \\
0 & \odot & \omega/2  & 0 \\
0 & -\omega/2 & \odot & 0 \\
\odot_{-} & 0 & 0 & \odot_{+}
\end{pmatrix},
\end{equation}
with 
\begin{align*}
\odot =& \eta+\mu(1+e^{\beta_{\mathrm{f}} \omega}),\\
\odot_{\pm} =& 2\mu(1\pm e^{\beta_{\mathrm{f}} \omega}).
\end{align*}

Now consider another qubit dynamics which is a generalized form of the above one and is described by
\begin{equation}
\label{masterTherm}
\mathbbmss{L}_{\mathrm{th}}= \begin{pmatrix}
0 & 0 & 0 & 0       \\
0 & \nu & \omega/2  & 0 \\
0 & -\omega/2 & \alpha & 0 \\
\chi & 0 & 0 & \zeta
\end{pmatrix}.
\end{equation}
This dynamics is physically legitimate because it is CPTP \cite{BFRev}. Moreover, it can be solved analytically, 
\begin{align}
r_{i}(\tau) &= u_{i+} e^{\tau k_+}+ u_{i-} e^{\tau k_-},\, i\in\{x,y\},\nonumber\\
r_{z}(\tau)&= e^{-2\zeta \tau}r_{z}(0)- \big(1-e^{-2\zeta \tau}\big) \chi/\zeta, \nonumber
\end{align}
where 
\begin{align}
u_{x\pm} &= \pm \frac{k_{\mp}+2\nu}{k_{-} \pm k_{+}} r_{x}(0) \pm \frac{\omega}{k_{-}\pm k_{+}} r_{y}(0), \nonumber \\
u_{y\pm} &= \pm \frac{k_{\mp}+2\alpha}{k_{-} \pm k_{+}} r_{y}(0) \mp \frac{\omega}{k_{-}\pm k_{+}} r_{x}(0), \nonumber \\
k_{\pm}&= -(\alpha +\nu) \pm i\sqrt{\omega^{2}-(\alpha-\nu)^{2}}. \nonumber
\end{align}
This solution implies that the dynamics is FPT. There is a unique asymptotic state $| \varrho(\infty) \rangle = (1,0,0,-\chi/\zeta)$ which is also a fixed point at finite times---because the dynamics obeys the semigroup composition law. Moreover, by comparison we see that the time evolution satisfying the QDB condition \eqref{masterQDB} is a special case of Eq. \eqref{masterTherm} in which 
\begin{align*}
\chi =& 2\mu(1-e^{\beta_{\mathrm{f}} \omega}), \\
\zeta =& 2\mu(1+e^{\beta_{\mathrm{f}} \omega}), \\
\nu=&\alpha= \eta+\mu(1+e^{\beta_{\mathrm{f}} \omega}).
\end{align*}

Since in general one can have $\nu\neq \alpha$, it can be concluded that there exist cases of qubit FPT dynamics which do not satisfy the QDB condition \eqref{QDB1}. Moreover, the time-invariant asymptotic state corresponds to a thermal state $\varrho^{(\beta_{\mathrm{f}})}$ with $H\propto \sigma_{z}$. Thus, the QDB condition \eqref{QDB2} with time-reversal implemented by complex conjugation relative to the $\sigma_{z}$ eigenbasis ($\mathpzc{C}$ as in appendix \ref{app:Theta}) is not satisfied. As a result then the QFR \eqref{FluctRel} is not equivalent in general to having the QDB condition in either forms (\ref{QDB1}) and (\ref{QDB2}).

\section{Summary}
\label{sec:conc}

We have considered thermalizing open quantum dynamics in its general form and also the special case \blue{of} dynamics with semigroup property generated by Lindblad generators. We have formulated an extended version of the quantum fluctuation relation which compares the probabilities of absorbing a given amount of energy and releasing the same amount to the environment---both evaluated for the forward-time dynamics. We then have sought sufficient conditions for an open-system dynamics to fulfill the quantum fluctuation relation. Specifically, we have shown that: (i) all thermalizing dynamics (irrespectively of the Hilbert-space dimension of the system) satisfy the quantum fluctuation relation asymptotically; (ii) if the thermalizing dynamics for qubits has the property that its thermal state is also stationary, it shows the quantum fluctuation property at any finite time too; and (iii) the quantum detailed balance condition (for each type of open dynamics) suffices to satisfy the proposed quantum fluctuation relation. We have, however, argued that the quantum fluctuation relation and the quantum detailed balance condition are not equivalent; one can find the former without the latter. 

Our study may shed light on how two important concepts in quantum thermalization and second law of thermodynamics for quantum systems are connected, and may enable further analysis of the emergence of other peculiar features of thermalizing dynamics or when systems may thermalize.

\textit{Acknowledgements.}---Initial inputs of S. Alipour are acknowledged. This work was partially supported by Iran's National Elites Foundation (to M.R.), the Project EU 2020 ERC-2015-STG G.A. No. 677488 (to S.M.), and Sharif University of Technology's Office of Vice President for Research (to A.T.R.)


\appendix
\section{Review of the time-reversal operations $\Theta$ and $\mathpzc{T}$}
\label{app:Theta}

In standard quantum mechanics \cite{book:Sakurai}, it is argued that the time-reversal operator $\Theta:\mathpzc{H}\mapsto \mathpzc{H}$ is an \textit{antiunitary} operator defined by its action on the position and momentum operators, 
\begin{align}
\Theta\bm{R}\Theta^{\dag} &= \bm{R},\\
\Theta\bm{P}\Theta^{\dag}&= -\bm{P},
\end{align}
where $\bm{R}=(x,y,z)$ and $\bm{P}=(p_{x},p_{y},p_{z})$. By antiunitarity we mean the following properties satisfied together:
\begin{align}
\Theta(\alpha_{1}|v_{1}\rangle + \alpha_{2}|v_{2}\rangle)&= \alpha^{*}_{1}\Theta(|v_{1}\rangle) + \alpha^{*}_{2}\Theta(|v_{2}\rangle),\\
\langle \widetilde{v}_{2}|\widetilde{v}_{1}\rangle &= \langle v_{1}|v_{2}\rangle,
\label{tt}
\end{align}
for all $\alpha_{1},\alpha_{2}\in\mathbb{C}$ and $|v_{1}\rangle,|v_{2}\rangle\in\mathpzc{H}$, with $|\widetilde{v}\rangle\equiv \Theta(|v\rangle)$. Equation (\ref{tt}) implies that $\Theta^{\dag}\Theta=\Theta\Theta^{\dag}=\mathbbmss{I}$. As a result, if $\{|e_{j}\rangle\}_{j=1}^{d}$ is an orthonormal basis set for $\mathpzc{H}$, so is the set $\{\Theta(|e_{j}\rangle)\}_{j=1}^{d}$. More importantly, one can show that for all $A\in\ M_{d}(\mathbb{C})$ and $|v_{1}\rangle,|v_{2}\rangle\in\mathpzc{H}$ \cite{book:Sakurai}
\begin{equation}
\langle v_{1}|A|v_{2}\rangle = \langle \widetilde{v}_{2}|\Theta\big( A^{\dag}\Theta^{\dag}(|\widetilde{v}_{1}\rangle)\big).
\label{TT}
\end{equation}

For \textit{spinless} quantum systems, it can be seen that $\Theta$ is tantamount to complex conjugation $\mathpzc{C}$ with respect to a chosen orthonormal basis $\{|e_{j}\rangle\}_{j=1}^{d}$ for $\mathpzc{H}$, which is the antilinear operation defined through 
\begin{equation}
\mathpzc{C}[|\psi\rangle] = \mathpzc{C}\Big[\sum_{j}\langle e_{j}|\psi\rangle\vert e_{j}\rangle\Big] = \sum_{j} \langle e_{j} |\psi\rangle^{*} |e_j\rangle.
\end{equation}
Note that $\mathpzc{C}^{\dag}=\mathpzc{C}$ and $\mathpzc{C}\mathpzc{C}^{\dag}=\mathpzc{C}^{\dag}\mathpzc{C}=\mathbbmss{I}$. Obviously, $\mathpzc{C}$ in the position representation we have $\mathpzc{C}^{\dag}\bm{R}\mathpzc{C}=\bm{R}$; whereas $\mathpzc{C}^{\dag}\bm{P}\mathpzc{C}=-\bm{P}$, from whence the orbital angular momentum operator ($\bm{L}=\bm{R}\times \bm{P}$) fulfills $\mathpzc{C}^{\dag}\bm{L}\mathpzc{C} =-\bm{L}$. Similarly, for two-level systems, $\mathpzc{C}$ with respect to the $\sigma_{z}$-eigenbasis acts as $\mathpzc{C}^{\dag} \sigma_{x} \mathpzc{C}=\sigma_{x}$, $\mathpzc{C}^{\dag}\sigma_{y} \mathpzc{C}=-\sigma_{y}$, and $\mathpzc{C}^{\dag}\sigma_{z} \mathpzc{C}=\sigma_{z}$.

The situation is, however, different for \textit{general} angular momentum $\mathbf{J}$ (including \textit{spin} degree of freedom; $\bm{J}= \bm{L} + \bm{S}$). Although $\mathpzc{C}^{\dag}\bm{J}\mathpzc{C}=-\bm{J}$, it is shown that in general 
\begin{equation}
\Theta=e^{-i\pi J_{y}}\mathpzc{C},
\end{equation}
which for spin-$1/2$ reduces to $\Theta =-i\sigma_{y}\mathpzc{C}$. As a result, $\Theta^{2}=+\mathbbmss{I}$ for integer and $\Theta^{2}=-\mathbbmss{I}$ for half-integer $\bm{J}$---see Ref. \cite{book:Sakurai} for a detailed discussion.

Equation (\ref{TT}) motivates the definition of the time-reversal operation $\mathpzc{T}$ as in Eq. (\ref{Trevaut}). One can show that \cite{thesis} $\mathpzc{T}$ is indeed a \textit{linear}, norm- and trace-preserving map in the sense that: (i) $\mathpzc{T}[\alpha_{1}A_{1} + \alpha_{2}A_{2}] = \alpha_{1}\mathpzc{T}[A_{1}] + \alpha_{2}\mathpzc{T}[A_{2}]$ ($\forall\alpha_{1},\alpha_{2}\in\mathbb{C}$ and $\forall A_{1},A_{2}\in M_{d}(\mathbb{C})$); (ii) $\Vert \mathpzc{T}[A]\Vert =\Vert A\Vert$ ($\forall A\in M_{d}(\mathbb{C})$), where $\Vert \cdot\Vert$ is the induced norm on $M_{d}(\mathbb{C})$ \cite{book:Hassani}; and (iii) $\mathrm{Tr}\big[\mathpzc{T}[A]\big]=\mathrm{Tr}[A]$ ($\forall A\in M_{d}(\mathbb{C})$); which has the extra properties (iv) $\mathpzc{T}[A^{\dag}]=(\mathpzc{T}[A])^{\dag}$ ($\forall A\in M_{d}(\mathbb{C})$), (v) $\mathpzc{T}[AB]=\mathpzc{T}[B]\,\mathpzc{T}[A]$ ($\forall A,B\in M_{d}(\mathbb{C})$), (vi) $\mathpzc{T}[A]$ being a \textit{linear} operator $\in M_{d}(\mathbb{C})$ for all $A\in M_{d}(\mathbb{C})$, and (vii) $\mathpzc{T}\circ \mathpzc{T}=\mathpzc{I}$ (with $\mathpzc{I}$ being the identity map).



\begin{thebibliography}{100}

\bibitem{book:Callen} 
H. B. Callen, \emph{Thermodynamics and an Introduction to Thermostatistics} (Wiley, New York, 1985).

\bibitem{Jarzynski-annualreview} C. Jarzynski, Annu. Rev. Condens. Matter Phys. \textbf{2}, 329 (2011).

\bibitem{Evans-Searles} 
D. J. Evans and D. J. Searles, Adv. Phys. \textbf{51}, 1529 (2002); 
E. M. Sevick, R. Prabhakar, S. R. Williams, and D. J. Searles, Annu. Rev. Phys. Chem. \textbf{59}, 603 (2008).

\bibitem{1997-Jarzynski}
C. Jarzynski, Phys. Rev. Lett. \textbf{78}, 2690 (1997).

\bibitem{1999-Crooks}
G. E. Crooks, Phys. Rev. E \textbf{60}, 2721  (1999).

\bibitem{2012-Seifert}
U. Seifert, Rep. Prog. Phys. \textbf{75}, 126001 (2012).

\bibitem{book:Evans} 
D. J. Evans, D. J. Searles, and S. R. Williams, \emph{Fundamentals of Classical Statistical Thermodynamics -- Dissipation, Relaxation and Fluctuation Theorems} (Wiley-VCH, Weinheim, 2016).

\bibitem{2000-Crooks} 
G. E. Crooks, Phys. Rev. E \textbf{61}, 2361 (2000).

\bibitem{2000-Kurchan}
J. Kurchan, arXiv:cond-mat/0007360 (2000). 

\bibitem{2000-Tasaki}
H. Tasaki, arXiv:cond-mat/0009244 (2000). 

\bibitem{2009-Esposito}
M. Esposito, U. Harbola, and S. Mukamel, Rev. Mod. Phys. \textbf{81}, 1665 (2009).

\bibitem{2011-Campisi}
M. Campisi, P. H\"anggi, and P. Talkner, Rev. Mod. Phys. \textbf{83}, 771 (2011).

\bibitem{2009-Campisi}
M. Campisi, P. Talkner, and P. H\"anggi, Phys. Rev. Lett. \textbf{102}, 210401 (2009).

\bibitem{2008-Crooks} 
G. E. Crooks, J. Stat. Mech. (2008) P10023.

\bibitem{Crooks-T}
G. E. Crooks, Phys. Rev. A \textbf{77}, 034101 (2008).

\bibitem{2004-Jarzynski} 
C. Jarzynski and D. K. W\'ojcik, Phys. Rev. Lett. \textbf{92}, 230602 (2004).



\bibitem{2013-Albash} T. Albash, D. A. Lidar, M. Marvian, and P. Zanardi, Phys. Rev. E \textbf{88}, 032146 (2013).

\bibitem{2014-Rastegin} A. E. Rastegin and K. \.{Z}yczkowski, Phys. Rev. E \textbf{89}, 012127 (2014).

\bibitem{2015-Goold}J. Goold, M. Paternostro, and K. Modi, Phys. Rev. Lett. \textbf{114}, 060602 (2015).

\bibitem{2015-Aurell} E. Aurell, J. Zakrzewski, and K. \.{Z}yczkowski, J. Phys. A: Math. Theor. \textbf{48}, 38FT01 (2015).

\bibitem{2015-Manzano} G. Manzano, J. M. Horowitz, and J. M. R. Parrondo, Phys. Rev. E \textbf{92}, 032129 (2015).

\bibitem{2014-Jaksic} V. Jak\v{s}i\'c, C.-A. Pillet, and M. Westrich, J. Stat. Phys. \textbf{154}, 153 (2014).

\bibitem{2018-Ramezani} M. Ramezani, M. Golshani, and A. T. Rezakhani, Phys. Rev. E \textbf{97}, 042101 (2018).

\bibitem{1972-Kossakowski} A. Kossakowski, Rep. Math. Phys \textbf{3}, 247 (1972).

\bibitem{1976-Alicki} R. Alicki, Rep. Math. Phys \textbf{10}, 249 (1976).

\bibitem{1977-Kossakowski} A. Frigerio, V. Gorini, A. Kossakowski, and M. Verri, Commun. Math. Phys. \textbf{57}, 97 (1977).

\bibitem{1984-Majewski} W. A. Majewski, J. Math. Phys. \textbf{25}, 614 (1984).

\bibitem{1998-Majewski} W. A. Majewski, and R. F. Streater, J. Phys. A: Math. Gen. \textbf{31}, 7981 (1998).

\bibitem{2007-Fagnola} F. Fagnola and V. Umanit\`a, Infin. Dimens. Anal. Quantum. Probab. Relat. Top. \textbf{10}, 335 (2007).

\bibitem{2008-Fagnola} F. Fagnola and V. Umanit\`a, Math. Notes \textbf{84}, 108 (2008).

\bibitem{2010a-Fagnola} F. Fagnola and V. Umanit\`a, Commun. Math. Phys. \textbf{298}, 523 (2010).

\bibitem{2010b-Fagnola} F. Fagnola and V. Umanit\`a, Banach Center Pub. \textbf{89}, 105 (2010).

\bibitem{Db-ent} R. Duvenhage and M. Snyman, J. Phys. A: Math. Theor. \textbf{48}, 155303 (2015).

\bibitem{thesis} E. Q. Miranda, Master's thesis, University of Trieste, 2017.

\bibitem{qthermo} J. Gemmer, M. Michel, and G. Mahler, \emph{Quantum Thermodynamics -- Emergence of Thermodynamic Behavior within Composite Quantum Systems} (Springer, Berlin, 2009).

\bibitem{qthermo-2} R. Kosloff, Entropy \textbf{15}, 2100 (2013); 
S. Vinjanampathy and J. Anders, Contemp. Phys. \textbf{57}, 545 (2016).

\bibitem{2016-Alipour} S. Alipour, F. Benatti, F. Bakhshinezhad, M. Afsary, S. Marcantoni, and A. T. Rezakhani, Sci. Rep. \textbf{6}, 35568 (2016).

\bibitem{book:Nielsen}M. A. Nielsen and I. L. Chuang, \emph{Quantum Information and Quantum Computation} (Cambridge University Press, Cambridge, 2000).

\bibitem{2002-Breuer} H. P. Breuer and F. Petruccione, \emph{The Theory of Open Quantum Systems} (Oxford University Press, Oxford, 2002); \'{A}. Rivas and S. F. Huelga, \emph{Open Quantum Systems: An Introduction} (Springer, Heidelberg, 2012).

\bibitem{hbar}We have assumed $\hbar\equiv 1$ throughout this paper.

\bibitem{book:Sakurai}J. J. Sakurai, \emph{Modern Quantum Mechanics} (Addison-Wesley, Reading, MA, 1999).

\bibitem{book:Hassani} S. Hassani, \emph{Mathematical Physics -- A Modem Introduction to Its Foundations} (Springer, 1999, New York).

\bibitem{2017-Marcantoni} S. Marcantoni, S. Alipour, F. Benatti, R. Floreanini, and A. T. Rezakhani, Sci. Rep. \textbf{7}, 12447 (2017).

\bibitem{BFRev} F. Benatti and R. Floreanini, Int. J. Mod. Phys. B  \textbf{19}, 3063 (2005).

\end{thebibliography}
\end{document}